\providecommand{\U}[1]{\protect\rule{.1in}{.1in}}
\newtheorem{theorem}{Theorem}
\newtheorem{definition}{Definition}
\newtheorem{lemma}{Lemma}
\newenvironment{proof}[1][Proof]{\textbf{#1.} }{\ \rule{0.5em}{0.5em}}
\begin{document}

\author{José Manuel Corcuera\thanks{Universitat de Barcelona, Gran Via de les Corts
Catalanes, 585, E-08007 Barcelona, Spain. \texttt{E-mail: } jmcorcuera@ub.edu.
The work of J. M. Corcuera is supported by the Spanish grant MTM2013-40782-P.}.}
\title{The Golden Age of the Mathematical Finance}
\maketitle

\begin{abstract}
\noindent This paper is devoted to show that the last quarter of the past
century can be considered as the golden age of the Mathematical Finance. In
this period the collaboration of great economists and the best generation of
probabilists, most of them from the Strasbourg's School led by Paul André
Meyer, gave rise to the foundations of this discipline. They established the
two fundamentals theorems of arbitrage theory, close formulas for options, the
main modelling approaches and created the appropriate framework for the
posterior development.

\textbf{Key words}: Financial asset pricing; Options; Arbitrage; Complete
markets; Semimartingale; Utility indifference price; Fundamental theorems of
asset pricing.

\noindent\textbf{JEL-Classification}{\ C61$\cdot$ D43$\cdot$ D44$\cdot$
D53$\cdot$ G11$\cdot$ G12$\cdot$ G14}

\noindent\textbf{MS-Classification 2020}: 60G35, 62M20, 91B50, 93E03

\end{abstract}

\section{Introduction}

In the last quarter of the past century a new mathematical discipline emerged:
the Mathematical Finance. It was the conjunction of great economists like F.
Black, M. Scholes, R. Merton, at the United States, and many others, with
great mathematicians, most of them belonging to, or following, the famous
\emph{Séminaire de Probabilités de Strasbourg} under the leadership of Paul
Andrè Meyer. The theory of stochastic integrals with respect to
semimartingales developed in this Seminar was the mathematical basis to
establish the main results of the new discipline. In this paper we are going
to explain how arose and were proved the two main theorems and how they
provided the framework to the Arbitrage Theory, the core of the Mathematical
Finance. The number of results and new paths opened during this period was too
huge to be described in one single paper. Issues like local and stochastic
volatility models, \citeA{dupire94}, \citeA{heston93}, the different
approaches in credit risk, \citeA{dufsing99}, the use of strict local
martingales to describe bubbles, \citeA{lowigr00}, the Heath-Jarrow-Morton
approach in interest rate models, \citeA{hejamo92}, the models under
transaction costs,\citeA{leland85}, and a long etc, are not treated here.
However many excellent books on the matter, written before the end of the last
century, already include them, for instance, \citeA{lamlap96},
\citeA{musruk97}, \citeA{bjork98}, \citeA{danjea98} or \citeA{shiryaev99},
among others.

In the next section we explain the beginning, at the 70's, and the precedents.
In the third section we explain the two fundamental theorems and their proofs,
showing this intertwining among economists and mathematicians. Finally we
explain the arbitrage theory for pricing and some of the strategies to solve
the problem of incompleteness and the multiplicity of (non) arbitrage prices.

\section{The inception}

Probably \emph{everything} started at 1973 when Black and Scholes published
their famous paper, \emph{The pricing of options and corporate liabilities},
see \citeA{blasch73}, where they give a formula for a call option based on an
arbitrage argument and the Itô formula. They assume that the stock $S_{t}$
follows a geometric Brownian motion (a model proposed previously by
\citeA{samuelson65} ), that is
\[
\mathrm{d}S_{t}=S_{t}\left(  \mu\mathrm{d}t+\sigma\mathrm{d}W_{t}\right)
,S_{0}>0.
\]
with $B$ a Brownian motion, and that the value of the unit of money, say
$B_{t}$, evolves as
\[
\mathrm{d}B_{t}=B_{t}r\mathrm{d}t.
\]
for $\mu,\sigma,r$ constants. The way that Black and Scholes obtained the
formula is the following. Suppose that the price, at time $t$, of the call
option, where the payoff is $\left(  S_{T}-K\right)  ^{+}$, is a smooth
function of the form
\[
C_{t}:=f(t,S_{t}),
\]
and consider a portfolio with $\beta_{t}$ calls\ and $\alpha_{t}$ stocks, the
cost of this portfolio is
\[
\beta_{t}C_{t}+\alpha_{t}S_{t}=:V_{t},
\]
and when it evolves, in a self-financed way its value changes as
\[
\mathrm{d}V_{t}=\beta_{t}\mathrm{d}C_{t}+\alpha_{t}\mathrm{d}S_{t},
\]%
\[
\mathrm{d}V_{t}=\beta_{t}\left(  \partial_{t}f\mathrm{d}t+\partial
_{x}f\mathrm{d}S_{t}+\frac{1}{2}\partial_{xx}fS_{t}^{2}\sigma^{2}%
\mathrm{d}t\right)  +\alpha_{t}\mathrm{d}S_{t},
\]
where we apply the Itô formula for stochastic differentials.

Now if we take $\alpha_{t}=-$ $\beta_{t}\partial_{x}f$ we have that the cost
of this portfolio is
\[
\mathrm{d}V_{t}=\beta_{t}\left(  \partial_{t}f+\frac{1}{2}\partial_{xx}%
fS_{t}^{2}\sigma^{2}\right)  \mathrm{d}t.
\]
It behaves like a bank account! remember that if we put $V_{t}$ in the bank
account then $\mathrm{d}V_{t}=V_{t}r\mathrm{d}t,$ then if we assume an
equilibrium situation in such a way that it is not possible to do profit
without risk, we must have
\[
\beta_{t}\left(  \partial_{t}f+\frac{1}{2}\partial_{xx}fS_{t}^{2}\sigma
^{2}\right)  =rV_{t}=r(\beta_{t}C_{t}-\beta_{t}\partial_{x}fS_{t})=r\beta
_{t}(f-\partial_{x}fS_{t}).
\]
So, the price of a call\ is the solution of the partial differential equation%
\begin{equation}
\partial_{t}f+rx\partial_{x}f+\frac{1}{2}\sigma^{2}x^{2}\partial
_{xx}f=rf,\label{eq1}%
\end{equation}
with the boundary condition $f(T,x)=(x-K)_{+}$. By doing a change of variable
it is easy to obtain
\[
C_{t}=S_{t}\Phi(d_{+})-Ke^{-r(T-t)}\Phi(d_{-})
\]
where $\Phi$ is the c.d.f. of a standard normal distribution and
\[
d_{\pm}=\frac{\log(\frac{S_{t}}{K})+(r\pm\frac{1}{2}\sigma^{2})(T-t)}%
{\sigma\surd(T-t)}.
\]
If we apply the Feynman-Kac formula, it is easy to see that
\[
C_{t}=\mathbb{E}_{\ast}\left(  \left.  (S_{T}-K)_{+}e^{-r(T-t)}\right\vert
S_{t}\right)
\]
where $\mathbb{E}_{\ast}$ is an expectation assuming that
\[
\mathrm{d}S_{t}=rS_{t}\mathrm{d}t+\sigma S_{t}\mathrm{d}W_{t}^{\ast},
\]
where $W^{\ast}$ is a Brownian motion as well. In fact by the integration by
parts formula, the boundary condition and equation (\ref{eq1})%
\begin{align*}
(S_{T}-K)_{+}e^{-r(T-t)} &  =f(T,S_{T})e^{-r(T-t)}=f(t,S_{t})+\int_{t}%
^{T}e^{-r(u-t)}\partial_{x}f\sigma S_{u}\mathrm{d}W_{u}^{\ast}\\
&  +\int_{t}^{T}e^{-r(u-t)}\left(  \partial_{u}f+rS_{u}\partial_{x}f+\frac
{1}{2}\sigma^{2}S_{u}^{2}\partial_{xx}f\right)  \mathrm{d}u\\
&  -\int_{t}^{T}e^{-r(u-t)}rf\mathrm{d}u\\
&  =f(t,S_{t})+\int_{t}^{T}e^{-r(u-t)}\partial_{x}f\sigma S_{u}\mathrm{d}%
W_{u}^{\ast}.
\end{align*}
Now taking the conditional expectation we obtain the result.

However Samuelson and Merton in 1969, see also \citeA{merton73} where the
author extends the analytical method of Black and Scholes to price more
complex options, wrote a paper where they give the following formula for the
price of a call option with strike $K$
\begin{equation}
f(t,S_{t})=e^{-r(T-t)}\int_{\frac{S_{t}}{K}}^{\infty}\left(  zS_{t}-K\right)
\mathrm{d}Q_{T-t}\left(  z\right)  \label{SAMMER}%
\end{equation}
where
\[
\int_{0}^{\infty}z\mathrm{d}Q_{T-t}\left(  z\right)  =e^{r(T-t)}.
\]
Then, in the special case when $Q_{t}$ is a log-normal distribution with
log-variance equal to $\sigma^{2}(T-t)$ we recover the Black-Scholes formula.

Nevertheless we have to move back until 1900 to see the real birth of the
mathematical finance. At that time the French postgraduate student Louis
Bachelier presented his thesis \emph{Théorie de la Speculation}. As it is said
in \citeA{cokabrcrlele00} this thesis, supervised by Henri Poincaré, contains
ideas of enormous value in both finance and probability. Bachelier was the
first to consider the stochastic processes that today we call it Brownian or
Wiener process. He use it to model the movements of stock prices. He consider
two constructions of the process, as a limit of (today named) a random walk
and as a solution of a Fourier equation (today heat equation). Even he finds
the distribution of the supremum of the Brownian motion. From a financial
point of view he introduces a new idea to price an option. He uses the
principle of the mathematical expectation in games: a game is fair if the
expectation of the profit is zero, but he calculate the price by taking the
expectation with respect to what today we call a \emph{risk-neutral
probability}. He establishes, as a fundamental principle, that the
mathematical expectation of the profit of a buyer or seller of a financial
product has to be zero. This principle allows him to calculate the price of a
call option and he obtains, translated in the usual nowadays notation, see
\citeA[p. 50]{Bachelier1900}, that%

\begin{equation}
C_{0}=\int_{S_{0}-K}^{\infty}\left(  z+S_{0}-K\right)  \mathrm{d}Q_{T}\left(
z\right)  \label{Bach}%
\end{equation}
where $Q_{T}\sim N(0,\sigma^{2}T).$ It is important to remark that for
Bachelier the Gaussian distribution was the objective one, the historical one.
Even the same seems to be true for the Black and Scholes model: they probably
believed that the geometric Brownian motion was the real distribution of the
stock and there were not distinction between the real and the risk-neutral
probability. At least this  is how Merton interprets the Black-Scholes
formula. He says, in Merton 73, page 161: "...However, $\mathrm{d}Q$ [in
(\ref{SAMMER})] is a risk-adjusted ("util-prob") distribution, depending on
both, risk-preferences and aggregate supplies, while the distribution in
(\ref{eq1}) is the \emph{objective} distribution of returns on the common
stock". Moreover, note that (\ref{Bach}) is an \textit{arithmetic} version of
(\ref{SAMMER}), except for the discount factor $e^{-r(T-t)}$. It seems that,
at least in the French market, all payments were related to a single date, in
this way the discount factor is not needed in (\ref{Bach}), that is Bachelier
was talking about \textit{forward values}.

The way that Bachelor's thesis was appreciated at his time is a bit
controversial. It is an extended legend that Henri Poincaré, his supervisor,
undervalued it because according to \citeA{Bernstein93} he says
\textquotedblleft The topic is somewhat remote from those our candidates are
in the habit of treating\textquotedblright. But this appreciation has been
discussed in \citeA{cokabrcrlele00} where they present the Poincare's report
on the Bachelier thesis showing that Poincaré supported quite strongly the
dissertation. Also they show that the reason why Bachelier did not get a
position in Paris is not related with any bad opinion of Poincaré but because
Bachelier enrolled in the army at the First World War. Moreover
\citeauthor{Bernstein93} tells a story about how the pioneers in mathematical
finance became aware about Bachelor's thesis suggesting that his work was
ignored for a long period, but there is evidence that Bachelor's work was more
known than the above legend says, Kolmogorov, Keynes, Feller, Lévy, among
others, knew it.

Following the works of Black, Scholes and Merton, Harrison and Kreps,
\citeA{harkre79} realize that the absence of arbitrage can be characterized by
the existence of a risk-neutral probability and that, adding a notion of
completeness, price of derivatives could be obtained as expectations of
discounted payoffs with respect to the risk-neutral probability, they analyze
the discrete time case and where the probability space is finite. Later
Harrison and Pliska in \citeA{harplis81} consider the continuous time case and
where the stock process is a semimartingale. They show that the existence of a
local martingale measure (or risk-neutral measure) implies absence of
arbitrage. They also define what is a complete market and establish a relation
with the theorems about martingale representation and the set of risk-neutral
probabilities. They prove that if the market is complete the set of
risk-neutral probabilities is a singleton. This latter issue is treated in
more detail in their paper \citeA{harplis93}. These papers established what
are known as the two fundamental theorems of asset pricing, the first one is
the characterization of the arbitrage and the second that of completeness.
These theorems open the way for extensions of the Black-Scholes model and new
price formulas of new derivatives. Curiously the semimartingale theory, the
only applicable since the semimartingale set of process is invariant under
changes of equivalent probabilities, was being developed in these decades but
without a direct relation with the issue. Harrison and Pliska say: "the parts
of probability theory most relevant to the general question (about what
processes for the stock yield a complete market) are those results, usually
abstract in appearance and French in origin...". They are refereeing to the
Strasbourg Seminaire led by André Meyer. They also add "...We have started to
feel that all the standard problems studied in martingale theory and all the
major results must have interpretations and applications in our setting".

\section{The fundamental theorems of asset pricing}

The term Fundamental Theorems of Asset Pricing (FTAP) was coined by Phil
Dybvig in 1987, to describe the work initiated by his thesis advisor, Steve
Ross around 1978, see \citeA{ross78}. The connection to martingale theory is
in \citeA{harkre79} with an important extension to the continuous time setting
in \citeA{harplis81}. In fact there are two fundamental theorems as we will see.

Let $S=$ $\left(  S_{t}^{0},S_{t}^{1}...,S_{t}^{d}\right)  _{t\in I}$ be a
non-negative $d+1$-dimensional semimartingale representing the price process
of $d+1$ securities, here $I$ is either a discrete finite set $I=\{0,1,..,T\}$
or a compact interval $I=[0,T].$ This process is assumed to be defined in a
complete filtered probability space $\left(  \Omega,\mathcal{F},\mathbb{F}%
,\mathbb{P}\right)  $ where $\mathbb{F=}\left(  \mathcal{F}_{t}\right)  _{t\in
I}$ is a filtration \ representing the flow of information and satisfying the
\emph{usual conditions}. We suppose that $S_{t}^{0}>0$ is $\mathcal{F}_{t-1}%
$-measurable, $t=1,...,T$, that is $S^{0}$ is predictable. We also assume that
$\mathcal{F}_{0}$ is trivial (a.s.) and that $\mathcal{F}_{T}=\mathcal{F}$.
\ We define the \emph{discounted price }as\emph{ }%

\[
\tilde{S}_{t}:=\frac{S_{t}}{S_{t}^{0}},\text{ }t\in I.
\]

\subsection{Discrete time setting}

\begin{definition}
A \emph{trading strategy} is a predictable stochastic process $\ \phi
=((\phi_{t}^{0},\phi_{t}^{1},...,\phi_{t}^{d}))_{1\leq t\leq T}$\ in $R^{d+1}%
$, that is $\phi_{t}^{i}$ is $\mathcal{F}_{t-1}$-measurable, \ for all $1\leq
t\leq T$.
\end{definition}

$\phi_{t}^{i}$\ indicates the number of units of $\ $security $i$ in the
portfolio at time $t$ and that $\phi$ is \ \textit{predictable} means that the
positions in the portfolio at $t$ is decided at $t-1,$ using the information
available in $\mathcal{F}_{t-1}$. Then we have also the following definitions.

\begin{definition}
The value of the portfolio associated with a trading strategy $\phi$ is given
by
\[
V_{t}(\phi)=\phi_{t}\cdot S_{t}:=\sum_{i=0}^{d}\phi_{t}^{i}S_{t}^{i},\quad
t=1,...,T,\quad V_{0}(\phi)=\phi_{1}\cdot S_{0}.
\]
and its \textit{discounted} value
\[
\tilde{V}_{t}(\phi):=\phi_{t}\cdot\tilde{S}_{t}%
\]

\end{definition}

\begin{definition}
A trading strategy $\phi$ is said to be \emph{self-financing}\textit{ }if
\[
V_{t}(\phi)=\phi_{t+1}\cdot S_{n},\quad t=1,...,T-1.
\]

\end{definition}

It is easy to see that the \emph{self-financing} condition is equivalent to
the equality%
\[
\tilde{V}_{t}(\phi)=\tilde{V}_{0}(\phi)+\sum_{i=1}^{t}\phi_{t}\cdot
\Delta\tilde{S}_{t},\text{ }t=1,...,T.
\]

\begin{definition}
An \emph{admissible} trading strategy $\phi$ is a \emph{self-financing}
strategy satisfying the following constraint%
\[
V_{t}\left(  \phi\right)  \geq0\text{ a.s. for }t=0,1,...,T
\]

\end{definition}

\begin{definition}
\emph{An admissible trading strategy }$\phi$ is an \emph{arbitrage opportunity
if it satisfies }%
\begin{align*}
V_{0}\left(  \phi\right)   &  =0,\text{ a.s. }\\
V_{T}\left(  \phi\right)   &  \geq0,\text{ a.s. and }\mathbb{P}\left(
V_{T}>0\right)  >0.
\end{align*}

\end{definition}

In an analogous way we can say when a self-financing strategy is an arbitrage.
In the discrete time setting we have the following lemma

\begin{lemma}
The class of admissible trading strategies contains no arbitrage opportunities
if and only if the class of self-financing strategies contains no arbitrage opportunities.
\end{lemma}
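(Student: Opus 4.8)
The plan is to prove the two implications separately, noting that one is immediate and the other carries all the content. Since every admissible strategy is by definition self-financing, the class of admissible strategies is contained in the class of self-financing strategies; hence if no self-financing strategy is an arbitrage, a fortiori none of the admissible ones is, which gives the ``if'' direction at once. For the converse I would argue by contraposition: starting from a self-financing arbitrage $\phi$, I would manufacture an \emph{admissible} arbitrage, so that the presence of arbitrage in the larger class forces it in the smaller one.

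The construction rests on the self-financing identity $\tilde V_t(\phi) = \tilde V_0(\phi) + \sum_{i=1}^{t}\phi_i\cdot\Delta\tilde S_i$. Because $S_t^0>0$, discounting preserves signs, so in discounted terms the arbitrage conditions read $\tilde V_0(\phi)=0$, $\tilde V_T(\phi)\geq 0$ and $\mathbb{P}(\tilde V_T(\phi)>0)>0$; the only obstruction to admissibility is that $\tilde V_t(\phi)$ may be negative at intermediate dates. First I would set $k:=\max\{t:\mathbb{P}(\tilde V_t(\phi)<0)>0\}$; since $\tilde V_0=0$ and $\tilde V_T\geq 0$ this index satisfies $1\leq k\leq T-1$ (and if the set is empty then $\phi$ is already admissible and nothing is to prove). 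I would then put $B:=\{\tilde V_k(\phi)<0\}\in\mathcal{F}_k$ and define a new strategy by freezing the risky positions to zero up to time $k$ and switching on the original positions, restricted to $B$, afterwards: $\psi_t^j:=\mathbf{1}_B\,\phi_t^j$ for $t\geq k+1$ and $\psi_t^j:=0$ for $t\leq k$, $j=1,\dots,d$, completing the numeraire position $\psi^0$ so that $\psi$ is self-financing with $V_0(\psi)=0$. Predictability is preserved because $B\in\mathcal{F}_k\subseteq\mathcal{F}_{t-1}$ for $t\geq k+1$.

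It then remains to verify that $\psi$ is an admissible arbitrage. A direct computation gives $\tilde V_t(\psi)=0$ for $t\leq k$ and $\tilde V_t(\psi)=\mathbf{1}_B\bigl(\tilde V_t(\phi)-\tilde V_k(\phi)\bigr)$ for $t\geq k+1$. The crux of the whole argument is the \emph{maximality} of $k$: for every $t>k$ one has $\tilde V_t(\phi)\geq 0$ almost surely, so on $B$ (where $\tilde V_k(\phi)<0$) the bracket is strictly positive, whence $\tilde V_t(\psi)\geq 0$ for all $t$ and $\psi$ is admissible. Finally $\tilde V_T(\psi)=\mathbf{1}_B\bigl(\tilde V_T(\phi)-\tilde V_k(\phi)\bigr)\geq 0$ and $\{\tilde V_T(\psi)>0\}\supseteq B$ with $\mathbb{P}(B)>0$ by the choice of $k$, so $\psi$ is an arbitrage. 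The main obstacle is precisely engineering the restriction so that nonnegativity holds at \emph{every} date while the terminal payoff remains a genuine arbitrage; selecting the last date of possible negativity together with the event $B$ is what reconciles these two competing requirements.
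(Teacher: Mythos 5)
Your proof is correct and follows essentially the same route as the paper's: you isolate the last date $k$ at which the discounted value can be negative, restrict the tail of the strategy to the event $B=\{\tilde V_k(\phi)<0\}$, and check that the resulting strategy is an admissible arbitrage. You are in fact slightly more careful than the paper, since you handle the case where $\phi$ is already admissible, note $\mathbb{P}(B)>0$ explicitly, and record the trivial converse inclusion.
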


\begin{proof}
Let $\varphi$ \ be a self-financing strategy that is an arbitrage. Define
\[
t=\inf\{u,\tilde{V}_{t}(\varphi)\geq0\text{ a.s. \ for all }u>t\text{ }\},
\]
note that $t\leq T-1$\ since $\tilde{V}_{T}(\varphi)\geq0.$\ Let $A=\left\{
\tilde{V}_{t}(\varphi)<0\right\}  $,\ define the predictable vector process
$\theta$, such that for all $i=1,...,d$\
\[
\theta_{u}^{i}=\left\{
\begin{tabular}
[c]{ll}%
$0$ & $u\leq t$\\
$\mathbf{1}_{A}\varphi_{u}^{i}$ & $u>t$%
\end{tabular}
\ \ \ \ \ \ \ \ \ \ \ \ \ \right.
\]
Then, $\tilde{V}_{u}()=0,$ for all $0\leq u\leq t$ and for all $u>t$\
\begin{align*}
\tilde{V}_{u}(\theta)  &  =\sum_{v=t+1}^{u}\mathbf{1}_{A}\varphi_{v}%
\cdot\Delta\tilde{S}_{v}=\mathbf{1}_{A}\left(  \sum_{v=1}^{u}\varphi_{v}%
\cdot\Delta\tilde{S}_{v}-\sum_{v=1}^{t}\varphi_{v}\cdot\Delta\tilde{S}%
_{v}\right) \\
&  =\mathbf{1}_{A}\left(  \tilde{V}_{u}(\varphi)-\tilde{V}_{t}(\varphi
)\right)  \geq0,
\end{align*}
so $\theta$ is admissible and $\tilde{V}_{T}(\theta)=\mathbf{1}_{A}\left(
\tilde{V}_{T}(\varphi)-\tilde{V}_{t}(\varphi)\right)  >0$ in $A,$ then
$\theta$ is an admissible arbitrage.
\end{proof}

Notice that, according to this lemma, if we consider only the admissible
strategies we are not reducing the possibilities of arbitrage opportunities.
Henceforth the term \emph{arbitrage opportunities} refers to \emph{admissible}
arbitrage opportunities.

\begin{definition}
Probabilities $\mathbb{P}$ and $\mathbb{P}^{\ast}$ defined on $\left(
\Omega,\mathcal{F}\right)  $ are said to be equivalent, we write
$\mathbb{P}\sim\mathbb{P}^{\ast}$, if they have the same null-sets.
\end{definition}

Now we can establish the first fundamental theorem of the asset pricing (FFTAP).

\begin{theorem}
There are not arbitrage opportunities if and only if there is a probability
$\mathbb{P}^{\ast}\sim$ $\mathbb{P}$ under which the process $\tilde{S}$ is a martingale.
\end{theorem}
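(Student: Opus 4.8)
The plan is to prove both implications, the sufficiency being elementary and the necessity resting on a separation argument whose technical core is a closedness property. For the sufficiency (existence of $\mathbb{P}^{\ast}$ precludes arbitrage), I would take any admissible strategy $\phi$ with $V_{0}(\phi)=0$ and use the self-financing identity already recorded,
\[
\tilde{V}_{t}(\phi)=\tilde{V}_{0}(\phi)+\sum_{i=1}^{t}\phi_{i}\cdot\Delta\tilde{S}_{i}.
\]
Since $\phi$ is predictable and $\tilde{S}$ is a $\mathbb{P}^{\ast}$-martingale, each increment satisfies $\mathbb{E}^{\ast}[\phi_{i}\cdot\Delta\tilde{S}_{i}\mid\mathcal{F}_{i-1}]=\phi_{i}\cdot\mathbb{E}^{\ast}[\Delta\tilde{S}_{i}\mid\mathcal{F}_{i-1}]=0$, so $\tilde{V}(\phi)$ is a $\mathbb{P}^{\ast}$-martingale transform. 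Admissibility forces $\tilde{V}_{t}(\phi)\geq0$, hence the transform is a nonnegative $\mathbb{P}^{\ast}$-supermartingale and $\mathbb{E}^{\ast}[\tilde{V}_{T}(\phi)]\leq\tilde{V}_{0}(\phi)=0$. As $\tilde{V}_{T}(\phi)\geq0$, this gives $\tilde{V}_{T}(\phi)=0$ $\mathbb{P}^{\ast}$-a.s., and by equivalence $\mathbb{P}$-a.s., contradicting $\mathbb{P}(V_{T}(\phi)>0)>0$.

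For the necessity I would pass to the convex cone of claims attainable at zero initial cost,
\[
K:=\left\{ \sum_{i=1}^{T}\phi_{i}\cdot\Delta\tilde{S}_{i}:\phi\text{ predictable}\right\},
\]
and note that no-arbitrage says exactly $K\cap L^{0}_{+}=\{0\}$. The target is a strictly positive linear functional annihilating $K$: once I have $\mathbb{P}^{\ast}\sim\mathbb{P}$ with $\mathbb{E}^{\ast}[g]=0$ for every $g\in K$, testing against the strategy that holds $\mathbf{1}_{F}$ units of asset $j$ at time $i$ (with $F\in\mathcal{F}_{i-1}$) yields $\mathbb{E}^{\ast}[\mathbf{1}_{F}\,\Delta\tilde{S}^{j}_{i}]=0$, i.e. $\mathbb{E}^{\ast}[\Delta\tilde{S}_{i}\mid\mathcal{F}_{i-1}]=0$, the martingale property. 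When $\Omega$ is finite the functional is produced directly: $K$ is then a linear subspace of $\mathbb{R}^{|\Omega|}$ meeting the compact simplex $\{x\in\mathbb{R}^{|\Omega|}_{+}:\sum_{\omega}x_{\omega}=1\}$ only outside itself, so the separating hyperplane theorem furnishes weights $q_{\omega}>0$ vanishing on $K$, and normalising $q$ gives an equivalent measure with bounded density.

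The main obstacle is the general (non-finite) case. Here I would first normalise by passing to an equivalent measure with bounded density making each $\Delta\tilde{S}_{i}$ integrable — which changes neither the null sets nor no-arbitrage — and then separate in $L^{\infty}$ via the Kreps--Yan theorem applied to the cone $C=(K-L^{0}_{+})\cap L^{\infty}$, for which one needs $C$ weak-$\ast$ closed with $C\cap L^{\infty}_{+}=\{0\}$. The genuinely hard point is the closedness of $K$ in $L^{0}$, since $K$ is generated by unbounded predictable integrands; I would obtain it by the Dalang--Morton--Willinger induction on the number of periods $T$, reducing at each step to the one-period closedness of $\{\xi\cdot\Delta\tilde{S}_{i}:\xi\in L^{0}(\mathcal{F}_{i-1})\}$. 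That one-period statement is the crux, and it is handled by a measurable selection argument which, on the random set where the increments span a lower-dimensional subspace, replaces a converging sequence of integrands by one valued in the orthogonal complement of the degenerate directions, so that boundedness can be forced and a limit extracted. Once closedness is secured, Kreps--Yan yields a strictly positive $Z\in L^{1}$ annihilating the bounded elements of $K$; a truncation argument extends $\mathbb{E}^{\ast}[\xi\cdot\Delta\tilde{S}_{i}]=0$ to all predictable $\xi$, and the $\mathbb{P}$-a.s. strictly positive density $Z$ defines the equivalent martingale measure $\mathbb{P}^{\ast}$.
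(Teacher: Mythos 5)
Your sufficiency argument is essentially the paper's, merely organized through the supermartingale property of the nonnegative martingale transform rather than through the iterated-conditioning computation; both carry the same harmless integrability caveat about pulling the possibly unbounded predictable factor $\phi_{i}$ out of the conditional expectation, which the nonnegativity of $\tilde V(\phi)$ resolves. For the necessity you take a genuinely different route. The paper handles finite $\Omega$ by the Harrison--Pliska separating-hyperplane argument exactly as you do, but for general $\Omega$ it follows Morton and Dalang--Morton--Willinger through a one-period lemma: for a bounded increment $Y$ supported in a compact set, either an arbitrage direction $\alpha$ exists or there is a strictly positive continuous $g$ with $\mathbb{E}(g(Y)Y)=0$, and the multi-period measure is then assembled by patching these conditional one-period densities via the Measurable Selection Theorem. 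You instead prove $L^{0}$-closedness of the cone of claims attainable at zero cost (by induction on $T$, with the measurable-selection/dimension-reduction step on the set where the conditional support of the increment is degenerate) and then separate by Kreps--Yan; this is precisely the Schachermayer (1992)/Kabanov--Stricker streamlining that the paper lists among the ``alternative proofs''. Your route dispenses with boundedness of the increments and delivers a closedness property that is reusable elsewhere (superreplication duality, completeness), while the paper's route trades that for a more elementary one-period statement at the price of a delicate patching step. One refinement is needed in your final step: Kreps--Yan in the weak-$\ast$ topology of $L^{\infty}$ produces a density $Z\in L^{1}$, under which the increments $\Delta\tilde S_{i}$ need not remain integrable, so the martingale property would not even be well posed; the standard fix is to perform the separation in the $\left(L^{1}(\widetilde{\mathbb{P}}),L^{\infty}(\widetilde{\mathbb{P}})\right)$ duality after your preliminary change of measure, so that the strictly positive separating element is a \emph{bounded} density and integrability of $\tilde S$ is preserved.
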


\begin{proof}
(Sufficiency) Suppose that $\tilde{S}$ is a martingale under some
$\mathbb{P}^{\ast}\sim$ $\mathbb{P}$ . Then, let $\phi$ be an arbitrage
opportunity, so, since it is admissible, $V_{t}\left(  \phi\right)  \geq0$
a.s. $\mathbb{P}^{\ast}$ (because the equivalence between $\mathbb{P}^{\ast}$
and $\mathbb{P}$) and $\mathbb{E}_{\mathbb{P}^{\ast}}\left(  V_{t}\left(
\phi\right)  \right)  $ is well defined, although possibly infinite. Then we
have
\begin{align*}
\mathbb{E}_{\mathbb{P}^{\ast}}\left(  \tilde{V}_{t}\left(  \phi\right)
\right)   &  =\mathbb{E}_{\mathbb{P}^{\ast}}\left(  \phi_{t}\cdot\tilde{S}%
_{t}\right)  =\mathbb{E}_{\mathbb{P}^{\ast}}\left(  \mathbb{E}_{\mathbb{P}%
^{\ast}}\left(  \left.  \phi_{t}\cdot\tilde{S}_{t}\right\vert \mathcal{F}%
_{t-1}\right)  \right)  \\
&  =\mathbb{E}_{\mathbb{P}^{\ast}}\left(  \phi_{t}\cdot\mathbb{E}%
_{\mathbb{P}^{\ast}}\left(  \left.  \tilde{S}_{t}\right\vert \mathcal{F}%
_{t-1}\right)  \right)  =\mathbb{E}_{\mathbb{P}^{\ast}}\left(  \phi_{t}%
\cdot\mathbb{E}_{\mathbb{P}^{\ast}}\left(  \left.  \tilde{S}_{t}\right\vert
\mathcal{F}_{t-1}\right)  \right)  \\
&  =\mathbb{E}_{\mathbb{P}^{\ast}}\left(  \phi_{t}\cdot\tilde{S}_{t-1}\right)
=\mathbb{E}_{\mathbb{P}^{\ast}}\left(  \phi_{t-1}\cdot\tilde{S}_{t-1}\right)
=...=\mathbb{E}_{\mathbb{P}^{\ast}}\left(  \tilde{V}_{0}\left(  \phi\right)
\right)  =0.
\end{align*}
where in the second line line we use the predictability of $\phi$ and the
martingale property of $\tilde{S}$,\ and in the third line the self-financing condition.

(Necessity)\ Now, if the number of elements in $\Omega$ is finite we have
simpler proofs than in the general case. For the finite case the standard
proof is that of \citeA{harplis81}, based on the separation hyperplane theorem
in $\mathbb{R}^{\left\vert \Omega\right\vert }$. \ If the sample space is not
finite \citeA{morton88} and \citeA{damowi90} give a proof based in the
following lemma for the case that $d=1$.

\begin{lemma}
Let $Y$ be a bounded $d$-dimensional random vector defined on some $\left(
\Omega,\mathcal{F},\mathbb{P}\right)  $ \ and assume that $Y\in K$ \ for some
compact set in $\mathbb{R}^{d}$ then one of the following two conditions
hold,
\begin{align*}
&  \left.  1.\text{ There exists }\alpha\in\mathbb{R}^{d}\text{ with }%
\alpha\cdot Y\geq0\text{ a.s and }\mathbb{P(}\alpha\cdot Y>0)>0.\right. \\
&  \left.  2.\text{ There exists a positive }g\in C(K)\text{ (the space of
real continuous functions on }K\text{) with }\right. \\
&  \left.  \text{
\ \ \ \ \ \ \ \ \ \ \ \ \ \ \ \ \ \ \ \ \ \ \ \ \ \ \ \ \ \ \ \ \ \ \ \ \ \ \ \ \ \ \ \ \ \ \ \ \ \ \ \ \ \ \ \ }%
\mathbb{E(}g\mathbb{(}Y\mathbb{)}Y)=0.\right.
\end{align*}

\end{lemma}
\end{proof}

For $d>1$ the idea is patching together the conditional martingale measures
corresponding to each period but this involves subtle arguments by using the
Measurable Selection Theorem. For alternative proofs that try to simplify the
above one, see \citeA{schachermayer92}, \citeA{rogers94} and \citeA{jacshi98}.

\bigskip Before stating the second fundamental theorem of asset pricing
(SFTAP) we need the following definitions.

\begin{definition}
Let $X$ be a claim, that is $X\geq0$, and $X\in\mathcal{F}_{T}$. We say that
$X$ is attainable, or replicable, if $X$ is equal to the final value of and
admissible strategy.
\end{definition}

Assume that the market model is free of arbitrage, then the set of
risk-neutral probabilities is not empty. If we choose one, say $\mathbb{P}%
^{\ast}$, we have the following definition.

\begin{definition}
We say that the market model is \emph{complete} if every claim $X\in
\mathcal{F}_{T},$ such that $\mathbb{E}_{\mathbb{P}^{\ast}}\left(  \tilde
{X}\right)  <\infty$, is attainable.
\end{definition}

Then the SFTAP reads as follows.

\begin{theorem}
The market is complete if and only if the risk-neutral probability is unique
\end{theorem}

\begin{proof}
(The 'only if' part) Assume that the market is complete, then, the claims of
the form $S_{T}^{0}\mathbf{1}_{A}$ with $A\in\mathcal{F}_{T}$ are attainable,
therefore we can write%
\[
\mathbf{1}_{A}=\tilde{V}_{0}(\phi)+\sum_{t=1}^{T}\phi_{t}\cdot\Delta\tilde
{S}_{t}%
\]
where $\phi$ is admissible. Now, if we assume that there are two risk-neutral
probabilities, say $\mathbb{P}^{\ast}$ and $\mathbb{Q}^{\ast}$, we have, since
$\tilde{S}$ is a martingale with respect to both probabilities $\mathbb{P}%
^{\ast}$ and $\mathbb{Q}^{\ast}$, that
\[
\mathbb{E}_{\mathbb{P}^{\ast}}\left(  \mathbf{1}_{A}\right)  =\mathbb{E}%
_{\mathbb{Q}^{\ast}}\left(  \mathbf{1}_{A}\right)  ,
\]
in such a way that $\mathbb{P}^{\ast}$and $\mathbb{Q}^{\ast}$ are the same
probability in $\mathcal{F}_{T}$.

(The 'if' part ) Assume first that $\Omega$ is finite. Let $H$ be the subset
of random variables of the form
\[
\tilde{V}_{0}\left(  \phi\right)  +\sum_{t=1}^{T}\phi_{t}\cdot\Delta\tilde
{S}_{t}%
\]
with $\phi$ predictable. $H$\ is a vector subspace of the vectorial space, say
$E$, formed by all random variables. Moreover it is not a trivial subspace, in
fact since the market is incomplete there will exist $h$ such that $\frac
{h}{S_{n}^{0}}\not \in H$ \ (note that if $h\geq0$ can be replicated by a
non-admissible strategy then the market cannot be viable). Let $\mathbb{P}%
^{\ast}$ be a risk-neutral probability in $E$, we can define the scalar
product $\langle X,Y\rangle=\mathbb{E}_{\mathbb{P}^{\ast}}(XY)$. Let $X$ be an
random variable orthogonal to $H$ and set
\[
\mathbb{P}^{\ast\ast}(\omega)=\left(  1+\frac{X(\omega)}{2||X||_{\infty}%
}\right)  \mathbb{P}^{\ast}(\omega).
\]
Then we have an equivalent probability to $\mathbb{P}^{\ast}:$%
\[
\mathbb{P}^{\ast\ast}(\omega)=\left(  1+\frac{X(\omega)}{2||X||_{\infty}%
}\right)  \mathbb{P}^{\ast}(\omega)>0,
\]

\
\[
\sum\mathbb{P}^{\ast\ast}(\omega)=\sum\mathbb{P}^{\ast}(\omega)+\frac
{\mathbb{E}_{\mathbb{P}^{\ast}}(X)}{2||X||_{\infty}}=\sum\mathbb{P}^{\ast
}(\omega)=1.
\]
Now, take $\mathbf{1}_{A},$ with $A\in\mathcal{F}_{t-1}.$ Then $\mathbf{1}%
_{A}\Delta\tilde{S}_{t}^{j}\in H$\ $\ $for all $j=1,..,d$ and $X$ is
orthogonal to $H,$ therefore \
\[
\mathbb{E}_{\mathbb{P}^{\ast\ast}}\left(  \mathbf{1}_{A}\Delta\tilde{S}%
_{t}^{j}\right)  =\mathbb{E}_{\mathbb{P}^{\ast}}\left(  \mathbf{1}_{A}%
\Delta\tilde{S}_{t}^{j}\right)  +\frac{\mathbb{E}_{\mathbb{P}^{\ast}}\left(
X\mathbf{1}_{A}\Delta\tilde{S}_{t}^{j}\right)  }{2||X||_{\infty}}=0,
\]
in such a way that $\tilde{S}$\ is a $\mathbb{P}^{\ast\ast}$-martingale. The
surprise for the general case is that if the risk-neutral probability is
unique then $\Omega$ is essentially finite for $\mathbb{P}$!, in the sense
that $\mathcal{F}_{T}$ is purely atomic with respect to $\mathbb{P}$ with at
most $(d+1)^{N}$ atoms, see Theorem 6 in \citeA{jacshi98}.
\end{proof}

\subsection{Continuous time setting}

Now we have that $I=[0,T],$ $S=$ $\left(  S_{t}^{0},S_{t}^{1}...,S_{t}%
^{d}\right)  _{t\in I}$ is a non-negative $d+1$-dimensional semimartingale
representing the price process of $d+1$ securities. This process is assumed to
be defined in a complete filtered probability space $\left(  \Omega
,\mathcal{F},\mathbb{F},\mathbb{P}\right)  $ where $\mathbb{F=}\left(
\mathcal{F}_{t}\right)  _{t\in I}$ is a filtration \ satisfying the
\emph{usual conditions}. We suppose that $S_{t}^{0}>0$. We also assume that
$\mathcal{F}_{0}$ is trivial (a.s.) and that $\mathcal{F}_{T}=\mathcal{F}$.
\ We define the \emph{discounted value }as\emph{ }%

\[
\tilde{S}_{t}:=\frac{S_{t}}{S_{t}^{0}},\text{ }t\in I.
\]

\begin{definition}
A trading strategy is a predictable stochastic process\ $\phi=((\phi_{t}%
^{0},\phi_{t}^{1},...,\phi_{t}^{d}))_{1\leq t\leq T}$ in $\mathbb{R}^{d+1}$,
that means that $\phi_{t}^{i}$ is measurable with respect to the sigma-field
generated by the càglàd adapted processes and that is integrable with respect
to $S$.
\end{definition}

Then we have also the following definitions.

\begin{definition}
The discounted value of the portfolio associated with a trading strategy
$\phi$ is given by
\[
\tilde{V}_{t}(\phi)=\phi_{t}\cdot\tilde{S}_{t}=\sum_{i=0}^{d}\phi_{t}%
^{i}\tilde{S}_{t}^{i},
\]

\end{definition}

\begin{definition}
A trading strategy $\phi$ is said to be \emph{self-financing}\textit{ }if
\[
\tilde{V}_{t}(\phi)=\phi_{0}\cdot\tilde{S}_{0}+\int_{0}^{t}\phi_{t}%
\cdot\mathrm{d}\tilde{S}_{t},\quad t\in\lbrack0,T].
\]

\end{definition}

\begin{definition}
An \emph{admissible} trading strategy $\phi$ is a \emph{self-financing}
strategy satisfying%
\[
\tilde{V}_{t}\left(  \phi\right)  \geq-a\text{ . for all }t\in\lbrack0,T]
\]
for some $a>0$.
\end{definition}

According to Delbaen and Schachermayer (1994) we use the notation%
\begin{align*}
&  \left.  K_{0}=\left\{  \int_{0}^{T}\phi_{s}\cdot\mathrm{d}\tilde{S}%
_{t},\phi\text{ admissible}\right\}  \right. \\
&  \left.  C_{0}=K_{0}-L_{+}^{0}\right. \\
&  \left.  K=K_{0}\cap L^{\infty}\right. \\
&  \left.  C=C_{0}\cap L^{\infty}\right. \\
&  \left.  \overline{C}\text{ the closure of }C\text{ under }L^{\infty
}\right.
\end{align*}

\begin{definition}
We say that the model satisfies the No Free Lunch with Vanishing Risk
condition (NFLVR) if $\overline{C}\cap L_{+}^{\infty}=\{0\}$
\end{definition}

We have the FFTAP in continuous time

\begin{theorem}
Let $\tilde{S}$ be a locally bounded $\mathbb{R}^{d}$-valued semimartingale.
There are not free lunches with vanishing risk if and only if there is
probability $\mathbb{P}^{\ast}\sim\mathbb{P}$ under which $\tilde{S}$ is a
local martingale.
\end{theorem}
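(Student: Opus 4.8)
The plan is to treat the two implications separately, with sufficiency being routine and necessity carrying essentially all the difficulty.

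For sufficiency I would start from an equivalent local martingale measure $\mathbb{P}^{\ast}$ and deduce NFLVR. The key observation is that for any admissible $\phi$ the process $(\phi\cdot\tilde{S})_t=\int_0^t\phi_s\cdot\mathrm{d}\tilde{S}_s$ is a $\mathbb{P}^{\ast}$-local martingale bounded below by $-a$; a local martingale bounded below is a supermartingale (Fatou along a localizing sequence), so $\mathbb{E}_{\mathbb{P}^{\ast}}\left[(\phi\cdot\tilde{S})_T\right]\le 0$. Hence every element of $K_0$ has nonpositive $\mathbb{P}^{\ast}$-expectation, and subtracting a nonnegative random variable only decreases it, so the same holds on $C_0$ and on $C$. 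Since $\mathbb{P}^{\ast}\sim\mathbb{P}$ the functional $f\mapsto\mathbb{E}_{\mathbb{P}^{\ast}}[f]$ is $\|\cdot\|_{\infty}$-continuous, so the inequality passes to the closure $\overline{C}$. If $f\in\overline{C}\cap L_{+}^{\infty}$ then $f\ge 0$ together with $\mathbb{E}_{\mathbb{P}^{\ast}}[f]\le 0$ forces $f=0$ a.s., which is exactly $\overline{C}\cap L_{+}^{\infty}=\{0\}$.

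For necessity I would follow the Delbaen--Schachermayer strategy. The decisive step is to show that, under NFLVR, the convex cone $C$ is closed in the weak-$\ast$ topology $\sigma(L^{\infty},L^{1})$. By the Krein--Smulian theorem it suffices to check that $C\cap\{\,\|f\|_{\infty}\le n\,\}$ is weak-$\ast$ closed for each $n$, which reduces to a closedness property under convergence in probability. This is established by first proving that NFLVR forces the terminal values of $1$-admissible strategies to be bounded in $L^{0}$ (the ``no unbounded profit with bounded risk'' property), and then extracting from a given sequence suitable forward convex combinations that converge almost surely (a Koml\'os-type argument), the admissibility bound controlling the negative parts; the almost sure limit is then identified as an element of $C$.

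Once weak-$\ast$ closedness is in hand the conclusion follows from functional analysis. Since $C$ is weak-$\ast$ closed, convex, and $C\cap L_{+}^{\infty}=\{0\}$, the Kreps--Yan separation theorem produces a strictly positive $g\in L^{1}$ with $\mathbb{E}[gf]\le 0$ for all $f\in C$; concretely one separates each nonzero $\mathbf{1}_A$ from $C$ and patches the countably many separating functionals by an exhaustion argument to secure strict positivity. Setting $\mathrm{d}\mathbb{P}^{\ast}=\frac{g}{\mathbb{E}[g]}\mathrm{d}\mathbb{P}$ yields $\mathbb{P}^{\ast}\sim\mathbb{P}$, and testing $\mathbb{E}_{\mathbb{P}^{\ast}}[(\phi\cdot\tilde{S})_T]\le 0$ against elementary integrands $\phi=\mathbf{1}_A\mathbf{1}_{]s,t]}$ with $A\in\mathcal{F}_s$ shows that $\tilde{S}$ is a $\mathbb{P}^{\ast}$-martingale on each level of the localizing sequence furnished by local boundedness, hence a local martingale. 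The main obstacle is precisely the weak-$\ast$ closedness of $C$: that is the analytically and probabilistically deep part, whereas sufficiency, the Kreps--Yan separation, and the final identification of the martingale property are comparatively routine.
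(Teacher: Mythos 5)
Your proposal is correct and follows essentially the same route as the paper: Ansel--Stricker plus the supermartingale property of local martingales bounded below for sufficiency, and for necessity the Delbaen--Schachermayer weak-$\ast$ closedness of $C$ followed by Kreps--Yan separation, identification of the martingale property on simple integrands, and localization. You supply somewhat more detail than the paper (Krein--Smulian, the Koml\'os-type convex combinations, the exhaustion argument), but both treatments defer the genuinely hard step---the weak-$\ast$ closedness of $C$ under NFLVR---to the original reference.
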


\begin{proof}
(Sufficiency) Let $\phi$ be an admissible strategy with initial value equal to
zero and let $\mathbb{P}^{\ast}$ be a probability such that $\tilde{S}$ \ is a
$\mathbb{P}^{\ast}$-local martingale. Since $\tilde{V}_{\cdot}(\phi)=\int%
_{0}^{\cdot}\phi_{t}\cdot\mathrm{d}\tilde{S}_{t}$ is bounded below it is a
local martingale (see \citeA{ansstr94}) and in fact a supermartingale, by
Fatou's lemma. Then we have
\[
\mathbb{E}_{\mathbb{P}^{\ast}}\left(  \tilde{V}_{T}(\phi)\right)  \leq0
\]
and we obtain that $\mathbb{E}_{\mathbb{P}^{\ast}}\left(  h\right)  \leq0$ for
every $h$ in $C$ and consequently for $h$ in $\bar{C}$ by the Lebesgue
theorem. Therefore $\bar{C}\cap L_{+}^{\infty}=\{0\}.$

(Necessity) This is the difficult part, but essentially it consists in
considering the weak* topology in $L^{\infty}$ and to show that, under the
NFLVR condition, $C$\ is closed with this topology (the proof of this is very
technical and the reference is \citeA{delsch94}). Then we can apply the
separation theorem \emph{with this weak topology} (e.g., 9.2 in
\citeA{schaefer71}) and to show that there exist $\mathbb{P}^{\ast}%
\sim\mathbb{P}$ such that $\mathbb{E}_{\mathbb{P}^{\ast}}\left(  h\right)
\leq0$ for each $h$ in $C$ (see the details in \citeA{schachermayer94}, 3.1).
Now if we assume first that $\tilde{S}$ is a bounded semimartingale we have
that for each $s<t,$ $B\in\mathcal{F}_{s}$ $\ $and $\alpha\in\mathbb{R}$,
$\alpha\mathbf{1}_{B}\left(  \tilde{S}_{t}-\tilde{S}_{s}\right)  \in C$ .
Therefore $\mathbb{E}_{\mathbb{P}^{\ast}}\left(  \mathbf{1}_{B}\left(
\tilde{S}_{t}-\tilde{S}_{s}\right)  \right)  =0$ and $\mathbb{P}^{\ast}$ is a
risk-neutral probability for $\tilde{S}$. If $\tilde{S}$ is locally bounded
then a localization argument, and the result for the bounded case, allows us
to obtain a locally martingale measure $\mathbb{P}^{\ast}$ for $\tilde{S}.$
\end{proof}

Also we have a result for the case that $\tilde{S}$ $\ $is not locally
bounded. The result involves the concept of $\sigma$-martingale, that is a
process that can be obtained as a stochastic integral of a \emph{positive}
integrand with respect to a martingale, see \citeA{delsch99}. The result is
the following:

\begin{theorem}
Let $\tilde{S}$ be a $\mathbb{R}^{d}$-valued semimartingale. There are not
free lunches with vanishing risk if and only if there is probability
$\mathbb{P}^{\ast}\sim\mathbb{P}$ under which $\tilde{S}$ is a $\sigma$-martingale.
\end{theorem}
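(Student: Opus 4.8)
The plan is to reduce the statement to the locally bounded case already established in the previous theorem, exploiting a change of integrand that is made transparent by the very definition of a $\sigma$-martingale as a stochastic integral $\psi\cdot M$ of a strictly positive predictable $\psi$ against a martingale $M$.

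For \emph{sufficiency} I would argue exactly as in the locally bounded case, upgrading only the Ansel--Stricker step. Suppose $\tilde{S}$ is a $\sigma$-martingale under some $\mathbb{P}^{\ast}\sim\mathbb{P}$. If $\phi$ is admissible with zero initial value, then $\tilde{V}_{\cdot}(\phi)=\int_{0}^{\cdot}\phi_{t}\cdot\mathrm{d}\tilde{S}_{t}$ is bounded below, and the $\sigma$-martingale form of the Ansel--Stricker lemma (\citeA{ansstr94}) guarantees that a stochastic integral of a $\sigma$-martingale that is bounded below is a local martingale, hence a supermartingale by Fatou. Consequently $\mathbb{E}_{\mathbb{P}^{\ast}}(\tilde{V}_{T}(\phi))\leq 0$, so $\mathbb{E}_{\mathbb{P}^{\ast}}(h)\leq 0$ for every $h\in C$ and, by the Lebesgue theorem, for every $h\in\overline{C}$; therefore $\overline{C}\cap L_{+}^{\infty}=\{0\}$ and NFLVR holds.

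For \emph{necessity} I would first tame $\tilde{S}$. Since $\tilde{S}$ is a semimartingale it has c\`adl\`ag paths, so one may choose a strictly positive, bounded predictable process $\varphi$ for which the jumps of $Y:=\int\varphi\,\mathrm{d}\tilde{S}$ are bounded (the jump process of a semimartingale can be dominated by a predictable process, and the finite-variation part can be controlled simultaneously). A c\`adl\`ag adapted process with bounded jumps is locally bounded, so $Y$ is an $\mathbb{R}^{d}$-valued \emph{locally bounded} semimartingale. The crucial observation is that the sets of admissible outcomes for $Y$ and for $\tilde{S}$ coincide: by associativity of the stochastic integral $\psi\cdot Y=(\psi\varphi)\cdot\tilde{S}$, and since $\varphi>0$ the map $\psi\mapsto\psi\varphi$ is a bijection between the admissible integrands, whence $K_{0}^{Y}=K_{0}^{\tilde{S}}$ and so $\overline{C}^{Y}=\overline{C}^{\tilde{S}}$. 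Thus $\tilde{S}$ satisfies NFLVR if and only if $Y$ does.

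Now $Y$ is locally bounded and satisfies NFLVR, so the locally bounded theorem proved above applies and yields $\mathbb{P}^{\ast}\sim\mathbb{P}$ under which $Y=\varphi\cdot\tilde{S}$ is a local martingale. Choosing a further strictly positive predictable $\rho$ that reduces this local martingale to a true martingale $N:=\rho\cdot Y$, I would write $\tilde{S}=\frac{1}{\varphi}\cdot Y=\frac{1}{\varphi\rho}\cdot N$ with $\frac{1}{\varphi\rho}>0$ predictable, which is precisely the assertion that $\tilde{S}$ is a $\sigma$-martingale under $\mathbb{P}^{\ast}$. The bulk of the difficulty, namely the weak* closedness of $C$ and the attendant separation argument (\citeA{delsch94}), is inherited from the locally bounded theorem; the genuinely new ingredient, and the step I expect to require the most care, is the construction of $\varphi$ rendering $Y$ locally bounded while keeping the admissible cones identical, which is exactly what singles out the $\sigma$-martingale notion as the right one for the unbounded setting.
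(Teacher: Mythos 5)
Your sufficiency argument is sound: a stochastic integral of a $\sigma$-martingale that is uniformly bounded below is again a $\sigma$-martingale bounded below, hence a local martingale and a supermartingale, and the rest of the locally bounded proof goes through verbatim. Note, for calibration, that the paper itself does not prove this theorem at all; it only states it and refers to \citeA{delsch99}, so the real comparison is with Delbaen and Schachermayer's argument.

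The necessity direction, however, has a genuine gap at its very first step: there is in general \emph{no} strictly positive predictable $\varphi$ making $Y=\varphi\cdot\tilde{S}$ locally bounded. Local boundedness of a c\`adl\`ag process is equivalent (up to localization) to boundedness of its jumps, and the jump $\Delta(\varphi\cdot\tilde{S})_{\tau}=\varphi_{\tau}\Delta\tilde{S}_{\tau}$ at a totally inaccessible stopping time $\tau$ cannot be tamed by a predictable factor, because $\varphi_{\tau}$ is determined by $\mathcal{F}_{\tau-}$ while the jump size is not. Concretely, if $\tilde{S}$ is a compound Poisson process whose jump distribution is unbounded, then for any strictly positive predictable $\varphi$ the process $\varphi\cdot\tilde{S}$ still has unbounded jumps and is not locally bounded. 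What a predictable damping \emph{can} achieve is local integrability of the jumps (making $\varphi\cdot\tilde{S}$ a special semimartingale, since integrability is governed by the predictable compensator of the jump measure), but that is strictly weaker than local boundedness and does not let you invoke the previous theorem. This obstruction is precisely why the unbounded case is hard and why Delbaen and Schachermayer do not argue this way: they split $\tilde{S}$ into a part with jumps bounded by a constant (handled by the locally bounded theorem) plus the sum of the large jumps, prove a separate and quite delicate measure-change lemma to turn the large-jump part into a $\sigma$-martingale, and show that the $\sigma$-martingale measures are dense in the set of separating measures produced by the Kreps--Yan separation. Your closing remark that the construction of $\varphi$ ``is the step requiring the most care'' is therefore an understatement: as stated, that step is false, and repairing it requires the genuinely new machinery of the 1998 paper rather than a reduction to the locally bounded theorem. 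The final step of your argument (passing from a local martingale $Y$ to a representation $\tilde{S}=\frac{1}{\varphi\rho}\cdot N$ with $N$ a true martingale) is unobjectionable, since every local martingale is a $\sigma$-martingale, but it rests on the nonexistent $\varphi$.
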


We have the analogous definitions for completeness in the the continuous time
case, thought the meaning of replication now is in terms of stochastic integrals.

\begin{definition}
Let $X$ be a claim, that is $X\geq0$, and $X\in\mathcal{F}_{T}$. We say that
$X$ is attainable, or replicable, if $X$ is equal to the final value of and
admissible strategy.
\end{definition}

Assume that the market model satisfies the NFLVR condition, then the set of
risk-neutral probabilities is not empty. If we choose one, say $\mathbb{P}%
^{\ast}$, we have the following definition.

\begin{definition}
We say that the market model is \emph{complete} if every claim $X\in
\mathcal{F}_{T},$ such that $\mathbb{E}_{\mathbb{P}^{\ast}}\left(  \tilde
{X}\right)  <\infty$, is attainable.
\end{definition}

Then the SFTAP for the continuous time setting reads the same way as in the
discrete time case.

\begin{theorem}
The market is complete if and only if the risk-neutral probability is unique
\end{theorem}

\begin{proof}
(The 'only if' part) The proof is exactly the same as in the discrete time case.

(The 'if' part) Suppose that the risk neutral probability is unique. By
Theorem 11.2 in \citeA{jacod79} $\tilde{S}$ has \emph{the representation
property with respect to }$\mathbb{P}^{\ast}$ (any $\mathbb{P}^{\ast}%
$-martingale can we written as an integral with respect to $\tilde{S}$) if and
only $\mathbb{P}^{\ast}$ is an \emph{extremal} measure in the set of all
probability measures for which $\tilde{S}$ is a local martingale. \ Suppose
that $\mathbb{P}^{\ast}$ is not extremal, then there exist $\mathbb{Q}$,
$\ \mathbb{Q}^{\prime}$ local martingale measures (not equivalent to
$\mathbb{P}$) such that $\mathbb{P}^{\ast}=\lambda\mathbb{Q+}(1-\lambda
)\mathbb{Q}^{\prime}$ for some $\lambda\in(0,1)$. \ Also $\mathbb{Q}_{\beta
}:=\beta\mathbb{Q+}(1-\beta)\mathbb{Q}^{\prime}$ is \ a local martingale
measure for any $\beta\in(0,1),$ and \ since $\mathbb{Q<<P}^{\ast}$ and
$\mathbb{Q}^{\prime}\mathbb{<<P}^{\ast}$ $\ $\ we obtain that $\mathbb{Q}%
_{\beta}\sim\mathbb{P}^{\ast}$ contradicting the uniqueness of $\ \mathbb{P}%
^{\ast}$. \ Now by the representation property, if $\ \mathbb{E}%
_{\mathbb{P}^{\ast}}\left(  \tilde{X}\right)  <\infty$ we can write%
\[
\mathbb{E}_{\mathbb{P}^{\ast}}\left(  \tilde{X}|\mathcal{F}_{t}\right)
=\mathbb{E}_{\mathbb{P}^{\ast}}\left(  \tilde{X}\right)  +\int_{0}^{t}\phi
_{s}\cdot\mathrm{d}\tilde{S}_{s},t\in\lbrack0,T]
\]
and, in particular, with $\phi_{0}\cdot\tilde{S}_{0}=\mathbb{E}_{\mathbb{P}%
^{\ast}}\left(  \tilde{X}\right)  $ ,%
\[
\tilde{X}=\phi_{0}\cdot\tilde{S}_{0}+\int_{0}^{T}\phi_{t}\cdot\mathrm{d}%
\tilde{S}_{t}.
\]
Notice that the admissibility condition for $\phi$ is trivially satisfied
since $X\geq0.$
\end{proof}

\section{The arbitrage theory}

From the the previous results we deduce that in a complete market model we can
price any payoff $X$ such that $\mathbb{E}_{\mathbb{P}^{\ast}}\left(
\tilde{X}\right)  <\infty$ and the price at time $t$ is obviously the price of
the replicating portfolio, $V_{t}\left(  \phi\right)  ,$ given by the strategy
$\phi$ and such that $\tilde{V}_{t}\left(  \phi\right)  $ is a $\mathbb{P}%
^{\ast}$ martingale. Then since $\tilde{X}=$ $\tilde{V}_{T}\left(
\phi\right)  $ we have that
\begin{equation}
V_{t}\left(  \phi\right)  =S_{t}^{0}\mathbb{E}_{\mathbb{P}^{\ast}}\left(
\left.  \tilde{X}\right\vert \mathcal{F}_{t}\right)  . \label{price}%
\end{equation}
Notice that we can have other admissible strategies, say $\varphi$,
replicating $X$ but such that, the corresponding discounted portfolios, are
local martingales and since they are bounded below they are supermartingales.
Then
\[
\tilde{V}_{t}\left(  \phi\right)  =\mathbb{E}_{\mathbb{P}^{\ast}}\left(
\left.  \tilde{X}\right\vert \mathcal{F}_{t}\right)  =\mathbb{E}%
_{\mathbb{P}^{\ast}}\left(  \left.  \tilde{V}_{T}\left(  \varphi\right)
\right\vert \mathcal{F}_{t}\right)  \leq\tilde{V}_{t}\left(  \varphi\right)
.
\]
Consequently they are more expensive and no one would pay for that.

If the model is not complete but satisfying NFLVR, the formula (\ref{price})
can be used for pricing if we take one risk-neutral probability since, if we
do that, we will have a price market model free of arbitrage. But what is the
correct risk-neutral probability? One way to select a risk-neutral probability
is to ask an additional property to the risk-neutral probability. An
alternative is to choose a risk-neutral probability, say $\mathbb{P}^{\ast}$,
\emph{close} to $\mathbb{P}$. In this sense if we take a strictly convex
function we can try to minimize
\begin{equation}
\mathbb{E}\left(  V\left(  \frac{\mathrm{d}\mathbb{P}^{\ast}}{\mathrm{d}%
\mathbb{P}}\right)  \right)  . \label{min}%
\end{equation}
For instance if $V(x)=x\log x$ we have \emph{the minimal entropy martingale}
measure, see \citeA{frittelli00}, that coincides with the Esscher measure in
the case we model the stock by a geometric Lévy model, see \citeA{chan99}.
Some risk-neutral measures, like \emph{the minimal martingale measure} are
also related to the minimization of the cost to replicate perfectly the
contingent claims, see \citeA{folsch91}. It can be seen that, in certain
cases, \emph{the minimal} \emph{martingale measure} minimizes (\ref{min}) with
$V(x)=\frac{x^{2}}{2},$ see \citeA{chan99} and \citeA{schweizer95}.

If we consider the (negative) Legendre transform of $V$ we have
\[
U(x)=\inf_{y}\left(  V(y)+xy\right)  ,
\]
for the examples above we obtain $U(x)=-e^{-x}$ if $V(x)=x\log x$ and
$U(x)=-\frac{x^{2}}{2}$ if $V(x)=\frac{x^{2}}{2}$, \ that can be interpreted
as \emph{utility }functions. Then by the duality relationship we have that,
under appropriate conditions see \cite{schachermayer00}, the optimal wealth,
say $W_{T}^{\ast}$ , when we try to maximize the expected utility of the final
wealth (by using admissible strategies), satisfies
\[
\frac{U^{\prime}(W_{T}^{\ast})}{\mathbb{E}\left(  U^{\prime}(W_{T}^{\ast
})\right)  }=\frac{\mathrm{d}\mathbb{P}^{\ast}}{\mathrm{d}\mathbb{P}}%
\]
where $\frac{\mathrm{d}\mathbb{P}^{\ast}}{\mathrm{d}\mathbb{P}}$ minimize
(\ref{min}). Then if we use this risk-neutral probability to price derivatives
what we obtain is \emph{the marginal utility} \emph{indifference price}
proposed by Mark H.A. Davis, see \citeA{davis97}. In fact, let define, for an
initial wealth $x$,
\[
v(x):=\sup\mathbb{E}\left(  U\left(  W_{T,x}\right)  \right)
\]
assuming $U:\mathbb{R}_{+}\rightarrow\mathbb{R}$ is strictly concave and
strictly increasing $\mathcal{C}^{2}$ function, with $U^{\prime}(\infty)=0$
\ and $U^{\prime}(0^{+})=\infty$. If we invest $\varepsilon$ in a claim $\xi$
with price $p$ we can look for the value of $p,$ say $\hat{p},$ such that
\[
\frac{\mathrm{d}}{\mathrm{d}\varepsilon}\sup\mathbb{E}\left(  U\left(
W_{T,x-\varepsilon}+\frac{\varepsilon}{\hat{p}}\xi\right)  \right)
_{\varepsilon=0}=0,
\]
then, we have that
\[
\mathbb{E}\left(  U\left(  W_{T,x-\varepsilon}+\frac{\varepsilon}{p}%
\xi\right)  \right)  -\mathbb{E}\left(  U\left(  W_{T,x-\varepsilon}\right)
\right)  =\varepsilon\mathbb{E}\left(  U^{^{\prime}}\left(  W_{T,x-\varepsilon
}\right)  \frac{\xi}{p}\right)  +o\left(  \varepsilon\right)
\]
and from here we obtain that
\[
\hat{p}=\frac{\mathbb{E}\left(  U^{^{\prime}}\left(  W_{T,x}^{\ast}\right)
\xi\right)  }{v^{\prime}(x)},
\]
since $v^{\prime}(x)=\mathbb{E}\left(  U^{^{\prime}}\left(  W_{T,x}^{\ast
}\right)  \right)  $ we obtain that
\[
\hat{p}=\mathbb{E}_{\mathbb{P}^{\ast}}\left(  \xi\right)  ,\text{ }%
\]
with $\frac{\mathrm{d}\mathbb{P}^{\ast}}{\mathrm{d}\mathbb{P}}=\frac
{U^{\prime}(W_{T,x}^{\ast})}{\mathbb{E}\left(  U^{\prime}(W_{T,x}^{\ast
})\right)  }.$

\bibliographystyle{apacite}
\bibliography{refhistor}

\end{document}